\documentclass[conference]{IEEEtran}


\usepackage[linesnumbered,ruled,vlined]{algorithm2e}
\usepackage[noend]{algorithmic}
\usepackage[geometry]{ifsym}
\usepackage{ifsym}
\usepackage{amssymb}
\usepackage{bbm}
\usepackage{amscd}
\usepackage{dsfont}
\usepackage{amsmath}
\usepackage{epsfig}
\usepackage{graphics}
\usepackage{psfrag}
\usepackage{rotating}
\usepackage{amsmath}
\usepackage{amsfonts}
\usepackage{url}
\usepackage{color}
\usepackage[caption=false,font=footnotesize]{subfig}
\usepackage{epstopdf}
\usepackage{amsthm}
\usepackage{pgfplots}
\usepackage{tikz}
\usetikzlibrary{arrows,shapes,decorations,backgrounds}

\newtheorem{theorem}{Theorem}
\newtheorem{lemma}[theorem]{Lemma}

\newtheorem{definition}[theorem]{Definition}
\newtheorem{proposition}[theorem]{Proposition}

\newtheorem{rem}[theorem]{Remark}




\IEEEoverridecommandlockouts

\begin{document}
\title{Power Control in Massive MIMO with Dynamic User Population}

\author{\IEEEauthorblockN{ 
  Mohamad Assaad, Salah Eddine Hajri, Thomas Bonald, and Anthony Ephremides
\thanks{M.Assaad and S. Hajri are with the Laboratoire des Signaux et Systemes (L2S, CNRS), CentraleSupelec, France. \{Mohamad.Assaad, Salaheddine.hajri\}@centralesupelec.fr.} \\ \thanks{Thomas Bonald is with the department of Computer Science and Networking of Telecom ParisTech, France. \{thomas.bonald@telecom-paristech.fr\} } \\ \thanks{A. Ephremides is with the Department of Electrical and Computer Engineering and Institute for Systems Research, University of Maryland, College Park, MD 20742. \{etony@umd.edu\}. }}}



\maketitle

\begin{abstract}
\boldmath  
This paper considers the problem of power control in Massive MIMO systems taking into account the pilot contamination issue and the arrivals and departures of users in the network. Contrary to most of existing work  in MIMO systems that focuses on the physical layer with fixed number of users, we consider in this work that the users arrive dynamically and leave the network once they are served. We provide a power control strategy, having a polynomial complexity, and prove that this policy stabilizes the network whenever possible. We then provide a distributed implementation of the power control policy requiring low information exchange between the BSs and show that it achieves the same stability region as the centralized policy.  

\end{abstract}


\section{Introduction}\label{Intro}

	Multiuser MIMO is one of the main technologies that has been adopted for  wireless
	networks. It enables to exploit the degrees of freedom in the spatial domain in order to serve users in the same frequency band and time. Coupled with a large antenna array at the Base Station, the resulting massive MIMO system enables a huge increase in the network spectral and energy efficiency. Massive MIMO has been designated as a key technology in the 5G  wireless networks.
	It  was first proposed in \cite{mm0}. The idea was to  mimic the large processing gain provided by  spread-spectrum in 3G networks.	This gain is imitated through the use of a  large number of base station antennas in massive MIMO systems. 
    The  large  excess of  transmit antennas allows to  considerably improve the network capacity	through excessive spatial dimensions \cite{mm1}. It also enables to  average  out  the  effect  of  fast  fading  and  provides  extremely  accurate beamforming which allows to direct the signal into small areas \cite{mm2}. Furthermore, the numerous degrees-of-freedom offered by massive MIMO result into significantly reducing the transmit power \cite{mm3}.
    Most of prior work on massive MIMO systems concentrated on the network physical layer with little interest in  the dynamic nature of traffic at the flow level, where a flow typically corresponds to a file transfer \cite{mm1}-\cite{mm4}. On the other hand, most papers on flow level models of wireless networks rely on simple models of the physical layer, neglecting the impact of pilot signals \cite{flow1}-\cite{assaad5}.

To the best of our knowledge, this is the first work that considers the impact of  dynamic population of users in massive MIMO systems. We provide a power control framework that takes into account the pilot contamination and stabilizes the network for any arrival rate of users that lies inside the stability region. 
The main difficulty of the problem lies in the fact that the network is dynamic with variable number of users that interfere with each other due to pilot contamination. The throughput of each user is thus not convex/concave, which complicates further the development of power control strategies in general. In this paper, we provide a convex power control framework, in which the utility function is not the bit rate, and show that it stabilizes the network whenever  possible.

The remaining of the paper is organized as follows. The system model is provided in section II. In section III, the power control strategy and the stability analysis are described. The distributed implementation and its stability analysis are given in Section IV. Numerical results are given in Section V and Section VI concludes the paper. 


\section{System Model}\label{model}

\subsection{Physical layer model and Pilot contamination}

We consider a multi cell MIMO  scenario consisting of $L$ cells. The BS has $M$ transmit antennas and the user terminals UTs have single antenna each. The notation UT$_{k,l_k}$ denotes the $k$-th UT present in cell $l_k$.
We consider a discrete-time block-fading channel model where the channel
remains constant during a time equal to a given coherence interval and then changes
independently from one block to the other. 

In Massive MIMO, and due to the high number of antennas at the base station, the channel state information (CSI) is usually acquired via reverse link pilots. For high number of users (especially in multi cell scenarios), the channel information for a given user is polluted by undesired channels of the other users using the common pilot sequence. 


During the uplink training phase, the $k^{th}$ user in cell $l$ transmits its own pilot sequence $s_{k,l} \in \mathbf{R}^{\tau \times 1}$ of length $\tau$. $\tau$ is thus proportional to the number of orthogonal reverse link pilots. At the base station $l$, the pilot data received is then,
\[\mathbf{z}_l=\sum_{k=1}^K\left(\sqrt{p_s}\mathbf{S}_{k,l}\right)\sqrt{g_{k,l}}\mathbf{h}_{k,l} + n_l\]
where $\mathbf{S}_{k,l}=s_{k,l}\otimes \mathbf{I}_M$, $p_s$ is the pilot power, $g_{k,l}$ is the channel power gain (inversely proportional to the path loss) between user $k$ and base station $l$, $n_l$ is the white Gaussian noise,   and $\mathbf{h}_{k,l}$ is the small scale fading channel with distribution $\mathcal{CN}(\mathbf{0},\mathbf{I}_M)$. $\mathbf{h}_{k,l}$ $\forall$ $k,l$ are assumed to be  i.i.d. Each base station $l$ estimates the channels of its own users using an MMSE estimation. Let $\hat{\mathbf{f}}_{k,l}$ is the MMSE estimation of the channel between user $k$ and BS $l$. Let $\mathbf{\hat{F}}_l$ be the matrix containing the estimated channels of the users in cell $l$. 

In the downlink, the base station pre-codes the signals $\mathbf{x}$ of its own users before transmission. Two pre-coding schemes are commonly used in Massive MIMO namely the Conjugate Beamforming (CB and the Zero Forcing (ZF). In this paper, we focus on CB however our stability analysis results hold for ZF.   In the CB, the pre-coding matrix is simply $\mathbf{W}_l=\mathbf{\hat{F}}^*_l$ where $\mathbf{\hat{F}}^*_l$ is the conjugate of $\mathbf{\hat{F}}_l$. The terminals in each cell receive, 
\[y_{k,l}=\sum_l \sqrt{g_{k,l}}\mathbf{h}^H_{k,l}\mathbf{W}_l \mathbf{d}_l+b_{k,l}\]
 where $\mathbf{d}_l$ contains the data of all users in cell $l$ and $b_{k,l}$ is the additive Gaussian noise. 

The downlink effective SINR for user $k$ in cell $l_k$ can then be obtained \cite{mm3}:

\begin{equation}
\gamma_{k,l_k}=\frac{\frac{\nu_k \tau \rho p_{k,l_k}g^2_{k,l_k}}{1+\tau \rho q_k}}{1+\sum_{j=1}^L \theta_{k,j}\sum_{i=1}^{K_j} p_{i,j_i}+\tau \rho \sum_{i \in \mathcal{I}_k \setminus k}\frac{\nu_{i,l_i} p_{i,l_i}g^2_{k,l_i}}{1+\tau \rho q_i}}
\end{equation}
where $l_j$ is the index of the base station that serves user $j$, $K_j$ is the number of users in cell $j$, $\mathcal{j}$ is the set of terminals using the same pilot sequence as terminal $j$, $\rho$ is the uplink pilot channel SNR, and $q_i=\sum_{j \in \mathcal{I}_i} g_{j,l_i}$ (i.e. the sum of large scale fading channel gain between the base station serving terminal $i$ and all terminals using the same pilot sequence as terminal $i$ including terminal $i$ itself). $\nu_{k,l}=M-1$ and $\theta_{k,j}= g_{k,j}$. 

By using the notation $G_{k,l_k}=\frac{\nu_k \tau \rho g^2_{k,l_k}}{1+\tau \rho q_k}$, the downlink effective SINR can be written as, 
\begin{equation}
\gamma_{k,l_k}=\frac{ p_{k,l_k}G_{k,l_k}}{1+\sum_{j=1}^L \theta_{k,j}\sum_{i=1}^{K_j} p_{i,j}+ \sum_{i \in \mathcal{I}_k \setminus k} p_{i,l_i}G_{k,l_i}}
\end{equation}

\subsection{Flow/User Arrival model}

All the existing work in massive MIMO assumes that the network is static in the sense that a fixed set of mobiles are always receiving data from the BSs. In practice, however, the users  arrive dynamically and once they are served they leave the network. To the best of our knowledge, this is the first work that considers dynamic population   of users in massive MIMO. 

For mathematical tractability, we consider finite possible locations of the users in the cell. The number of locations could be however high to cover most/all possible locations. In each location, it may exist or not a user that requires to be served. Let $X_{k,l}(t)$ be the number of users in location $k$ in cell $l$ at time $t$. Each user has one flow to be served. The words "user" and "flow" denote hence the same thing, where a flow typically corresponds to a file transfer. It is worth mentioning that the extension of our model to the case where each user has  multiple flows is straightforward. In each location, the flows arrive  according to a Poisson process with rate $\lambda_{k,l}$. The data volume to be transmitted to each user has an exponential distribution. Note that the physical layer model described above stays valid under this assumption. The SINR $\gamma_{k,l_k}$ represents the SINR of the user in location $k$ in cell $l_k$.

We assume a separation of time-scales between the flow model and the physical layer. The presence of interference and pilot contamination induces a non convexity in the problem that complicates the power control and stability analysis. The index time $t$ refers to the flow level time. Between times $t$ and $t+1$ there are multiple physical layer time slots.

We can provide the following definition of network stability.   

\begin{definition}
The network is called stable if $\lim_{T\rightarrow\infty}\sup\frac{1}{T}\sum_{t=0}^{T-1} \sum_{l=1}^L\sum_{k=1}^{K_l}\mathbb{E}(X_{k,l}(t))< \infty$
\end{definition}

Let $\Lambda = \left(\lambda_{1,1}, ...,\lambda_{k,i},...\right)$ be the set of all flow/user arrival rates. This rate is defined as the average number of users arriving in the network. The arrival of users means that those users become active and start a connection. 

\begin{definition}
The stability region is the set of all possible rates  of arriving users $\Lambda$ such that there exists a corresponding power control policy  that makes the network stable. 
\end{definition}

We are therefore interested in providing a power control strategy that makes the network stable.




\section{Power Allocation Framework} 
Let $\Lambda^{max}$ be the stability region under power control policies. 

The goal is then the following 

$$\textrm{Allocate } \ \mathbf{p}$$
s.t. 
$$ \forall \mathbf{\lambda} \in \Lambda^{max} \ \textrm{the network is stable}$$

It is worth mentioning that the value of the average user arrival rate $\mathbf{\lambda}$ is a priori not known and the allocation policy must stabilize the network without the knowledge of $\mathbf{\lambda}$ (since the average number of users arriving in the network is unknown).  In this section, we provide a power control policy that stabilizes the network for any $\mathbf{\lambda} \in \Lambda^{max}$. It is worth noticing that our main contribution in this section lies in proving that the considered power allocation policy stabilizes the network $\forall \mathbf{\lambda} \in \Lambda^{max}$.

\subsection{Power Control Framework with Polynomial Complexity}

To do so, we consider the following framework, 
\begin{eqnarray}
\max_{\mathbf{p}} \sum_{l=1}^L \sum_{k=1}^{K_l}X_{k,l} log\left(\gamma_{k,l}(t)\right) \label{frame1}
\end{eqnarray}
s.t.
\begin{equation}
\sum_{k=1}^{K_l}p_{k,l_k} \leq P_l \ \ \ \forall l \label{frame2}
\end{equation}
where $$\gamma_{k,l}(t)= \frac{ p_{k,l_k}G_{k,l_k}}{1+\sum_{j=1}^L \theta_{k,j}\sum_{i=1}^{K_j} p_{i,j_i}+ \sum_{i \in \mathcal{I}_k \setminus k} p_{i,l_i}G_{k,l_i}}$$ To simplify the notation, we will denote the power of flows in location $k$ in cell $l_k$ by $p_{k,l}$.

By using the following variable change $\tilde{p}_{k,l}=log\left(p_{k,j}\right)$, we get,
\begin{equation} 
\min_{\mathbf{\tilde{p}}} - \sum_{l=1}^L \sum_{k=1}^{K_l} X_{k,l} U_{k,l}(\mathbf{\tilde{p}}) \label{framePower1}
\end{equation}
s.t.
\begin{equation}
\sum_{k=1}^{K_l}e^{\tilde{p}_{k,l}} \leq P_l \ \ \ \forall l \label{frameConstraint1}
\end{equation}
where 
\[U_{k,l}(\mathbf{\tilde{p}}) = log\left(\frac{e^{\tilde{p}_{k,l}}G_{k,l}}{1+\sum_{\substack{j=1}}^L \theta_{k,j}\sum_{\substack{i=1}}^{K_j} e^{\tilde{p}_{i,j}} + \sum_{\substack{i \in \mathcal{I}_k \setminus k}} e^{\tilde{p}_{i,l}} G_{k,l_i}}\right)
\]

\begin{rem}
In the aforementioned problem we do NOT make the high SINR approximation $log\left(1+\gamma_{k,l}(t)\right) \approx log\left(\gamma_{k,l}(t)\right)$. We have just defined a utility function proportional to $log\left(\gamma_{k,l}(t)\right)$. The rate expression is still given by $log\left(1+\gamma_{k,l}(t)\right)$. Our main contribution in this paper is to show that without making such high SINR approximation, solving the problem  (\ref{frame1})-(\ref{frame2}),  with bit rate $log\left(1+\gamma_{k,l}(t)\right)$, ensures the stability of the network at the flow level under dynamic arrivals and departures of users. One can see also that the considered power control is different from the proportional fairness policy. 
\end{rem}

\begin{proposition}
The above optimization problem is convex.
\end{proposition}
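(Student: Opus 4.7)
The plan is to exploit the well-known convexity of the log-sum-exp function together with the log change of variables already performed in the excerpt. The feasible region is clearly convex because each constraint $\sum_{k} e^{\tilde p_{k,l}}\le P_l$ is a sum of exponentials of affine functions of $\tilde{\mathbf p}$, and such a sum is convex (the exponential is convex and nondecreasing, composed with an affine map). So the work really reduces to showing that the objective is convex in $\tilde{\mathbf p}$.

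Next I would split each per-user term. Writing
\[
U_{k,l}(\tilde{\mathbf p}) = \tilde p_{k,l} + \log G_{k,l} - \log\!\Bigl(1+\sum_{j=1}^{L}\theta_{k,j}\sum_{i=1}^{K_j} e^{\tilde p_{i,j}} + \sum_{i\in\mathcal I_k\setminus k} G_{k,l_i}\, e^{\tilde p_{i,l}}\Bigr),
\]
the first two summands are affine in $\tilde{\mathbf p}$, so $-U_{k,l}$ inherits the convexity of the remaining term. The bracketed argument of the last logarithm has the form $e^{0}+\sum_{m} c_m e^{a_m^\top \tilde{\mathbf p}}$ where all coefficients $c_m=\theta_{k,j}$ or $c_m=G_{k,l_i}$ are nonnegative (the $1$ from noise plays the role of an $e^{0}$ term, and the absorbed constants can be pushed into the exponent by writing $c_m e^{a_m^\top \tilde{\mathbf p}} = e^{\log c_m + a_m^\top \tilde{\mathbf p}}$). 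Hence the expression inside $\log(\cdot)$ is a sum of exponentials of affine functions, and the whole thing is a log-sum-exp, which is a standard convex function. Therefore $-U_{k,l}$ is convex.

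Finally, since $X_{k,l}\ge 0$, the objective $-\sum_{l,k} X_{k,l} U_{k,l}(\tilde{\mathbf p})$ is a nonnegative linear combination of convex functions and is therefore convex. Combined with the convexity of the feasible set, this establishes that the problem is a convex optimization problem.

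The only step that requires any care is verifying that the denominator inside $U_{k,l}$ is indeed of log-sum-exp type after the change of variables; in particular one must check that the nonnegativity of the coefficients $\theta_{k,j}$ and $G_{k,l_i}$ (which allows rewriting $c_m e^{a_m^\top \tilde{\mathbf p}}$ as a single exponential of an affine function) holds, which is automatic since these come from squared path-loss coefficients. Apart from that, the proof is essentially a direct application of standard convexity rules and should be very short.
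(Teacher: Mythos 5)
Your proof is correct and follows essentially the same route as the paper: the same decomposition of $-U_{k,l}$ into an affine part plus $\log\bigl(1+\sum_m c_m e^{a_m^\top\tilde{\mathbf p}}\bigr)$, with nonnegativity of the weights $X_{k,l}$ closing the argument. The only difference is that you invoke the standard convexity of log-sum-exp composed with affine maps (after absorbing the nonnegative coefficients $\theta_{k,j}$, $G_{k,l_i}$ into the exponents and treating the noise term as $e^0$), whereas the paper re-derives that same fact by explicitly computing the Hessian of the log term and applying Cauchy--Schwarz; the two are interchangeable, and your version is the shorter one.
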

\begin{proof} The proof follows from \cite{chiang} and is given in the appendix for completeness.  
\end{proof}

The optimal solution of the aforementioned problem can then be obtained simply using the Lagrangian technique and KKT conditions. The Lagrangian is given as,

\begin{eqnarray}
L(\mathbf{p},\mathbf{\lambda})= -\sum_{l=1}^L \sum_{k=1}^{K_l} X_{k,l} \biggl[ log\left(e^{\tilde{p}_{k,l}}G_{k,l}\right) -\nonumber \\   log\left(1+\sum_{\substack{j=1}}^L \theta_{k,j}\sum_{\substack{i=1}}^{K_j} e^{\tilde{p}_{i,j}} + \sum_{\substack{i \in \mathcal{I}_k \setminus k}} e^{\tilde{p}_{i,l}} G_{k,l_i}\right) \biggr] \nonumber \\ + \sum_{l=1}^L \beta_l \left(\sum_{k=1}^{K_l}e^{\tilde{p}_{k,l}} - P_l \right)
\end{eqnarray}
One can check easily that slater condition is satisfied for the aforementioned problem. The problem can be solved with zero duality gap and the optimal solution can be obtained by solving the dual problem $\max_{\mathbf{\beta}} \min_{\mathbf{p}} L(\mathbf{p},\mathbf{\beta})$. The optimal power to be allocated to each user is then,
\begin{equation}
p^*_{k,l}= \frac{X_{k,l}}{\beta_l+ h_{k,l}(\mathbf{p}^*)}\label{optimum}
\end{equation}
where 
\[h_{k,l}(\mathbf{p}^*)=\sum_{l=1}^L\sum_{j}\frac{X_{j,l}(\theta_{j,l_k}+G_{j,l_k}\mathbf{1}_{k \in I_j \setminus j})}{1+\sum_{\substack{j=1}}^L \theta_{k,j}\sum_{\substack{i=1}}^{K_j} p^*_{i,j} + \sum_{\substack{i \in \mathcal{I}_j \setminus j}} p^*_{i,l} G_{k,l_i}}\]
The above power is obtained by taking $\frac{\partial  L(\mathbf{p},\mathbf{\beta})}{ \partial \tilde{p}_{k,l}}=0$. In order to find the optimal power, the above equation must be solved and the optimal Lagrange multipliers  $\beta_l$ must be determined. This can be done using the following algorithm. 

\begin{algorithm}[htb]
\caption{Optimal Power Control  Algorithm}
\label{alg:I}
\begin{enumerate}
\item Initialize: $\mathbf{\beta}$, $\mathbf{p}$, $\epsilon$ is a very small value;
\item Initialize $n=0$ and $i=0$ ($n$ and $i$ are the indices of two loops) 
\item
\begin{enumerate}
\item For given value of $\beta_{l}(i)$ $\forall$ $l$, repeat
 \[p_{k,l}(n+1)=\frac{X_{k,l}}{\beta_l(i) + h_{k,l}(\mathbf{p}(n))}\] $\forall$ $k,l$
\item until $p_{k,l}(n+1)=\frac{X_{k,l}}{\beta_l(i) + h_{k,l}(\mathbf{p}(n))}= p_{k,l}(n)$ $\forall$ $k,l$ (i.e. a fixed point is achieved); Call this fixed point $\mathbf{p}(\beta)$;
\end{enumerate}
\item Update $\beta_l(i+1)=\left(\beta_l(i)+\delta_i(\sum_{k=1}^{K_l}p_{k,l}(\beta) - P_l)\right)^+$ and set $i=i+1$
\item if $\beta_l(i)=0$ $\forall l$ then terminate. 
\item Else if $(\sum_{k=1}^{K_l}p_{k,l}(\mathbb{\beta}) - P_l) \leq \epsilon$ $\forall$ $l$ for which $\beta_l(i)>0$, then terminate. 
\item Else, return to Step 3.  
\end{enumerate}
\end{algorithm}
The algorithm consists of two loops. The outer loop updates the values of $\beta_l$ using the sub gradient method i.e. $\beta_l(i+1)=\left(\beta_l(i)+\delta_i(\sum_{k=1}^{K_l}p_{k,l}(\beta(i)) - P_l)\right)^+$.  For each value of $\beta_l$ (denoted by $\beta_l(i)$ (i.e. iteration $i$ of the outer loop), the inner loop updates the values of power $p_{k,l}$ using the fixed point equation $p_{k,l}(n+1)=\frac{X_{k,l}}{\beta_l(i) + h_{k,l}(\mathbf{p}(n))}$ until a fixed point is achieved. 
\subsubsection{Convergence of Algorithm (\ref{alg:I})}
One can see that the function $\frac{X_{k,l}}{\beta_l + h(\mathbf{p})}$ is a standard function (one can refer to \cite{yates}\cite{JP09} for more details): 
\begin{itemize}
\item It is positive 
\item Monotone: if $\mathbf{p}_1 \geq \mathbf{p}_2$ then $\frac{X_{k,l}}{\beta_l + h(\mathbf{p}_1)} \geq \frac{X_{k,l}}{\beta_l+ h(\mathbf{p}_2)}$ 
\item Scalable: for $\gamma >1$ then $\frac{X_{k,l}}{\beta_l + h(\gamma \mathbf{p})} < \gamma \frac{X_{k,l}}{\beta_l + h(\mathbf{p})}$
\end{itemize}
Consequently, for each value of $\beta_l$ ($\forall$ $l$), the algorithm 
\[p_{k,l}(n+1)=\frac{X_{k,l}}{\beta_l + h_{k,l}(\mathbf{p}(n))}\]
converges to the fixed point $p_{k,l}^*=\frac{X_{k,l}}{\beta_l + h_{k,l}(\mathbf{p}^*)}$ \cite{yates}\cite{JP09}. This shows the convergence of the inner loop. Concerning the outer loop, the update $\beta_l(i+1)=\left(\beta_l(i)+\delta_i(\sum_{k=1}^{K_l}p_{k,l}(\beta) - P_l)\right)^+$ is nothing but the sub-gradient update method which is, due to the convexity of our optimization framework,  ensured to converge to the optimal value of $\beta_l$ (say $\beta^*_l$) for vanishing step size  $\delta_i$ and to within a close interval around $\beta^*_l$ for fixed step size $\delta_i=\delta$. One can refer to \cite{boyd} for more details on the convergence of sub-gradient descent method. 

\subsection{Stability analysis of the network}
In this section, we will show that the aforementioned power allocation policy stabilizes the network $\forall$  $\mathbf{\lambda}$ $\in$ $\Lambda^{max}$.  In other words, as far as the network stability is concerned there is no other allocation policy that can outperform the aforementioned power control policy.   
For that, we use the Fluid limit machinery to prove the stability. The main idea is to define a new process, say $\mathbf{Y}(t)$ by scaling or compressing time and accordingly scaling down the magnitude of the process $\mathbf{X}$. The process $\mathbf{Y}(t)$ can be seen as a deterministic fluid process driven by a fluid arrival process with constant rate.  In order to show that the network is stable under the power allocation policy (\ref{optimum}), it is sufficient to show that a Lyapunov function of the  fluid limit trajectory has a negative drift \cite{bonald2001,andrews2004,fluid79}.

We consider that the arrivals of the users follow a poisson distribution. Each user has a document, of size following a general distribution with mean 1, to be served. However, this assumption can be relaxed to renewal arrival processes and general document size distribution with mean $m_s$. We have that $\mathbf{X}=(X_{k,l})$, $\forall$ $k,l$, is a Markov process.  At each time $t$, the number of flows/users evolves as follows, 
\[ X_{k,l }\rightarrow X_{k,l}+1 \ \ \ \textrm{at rate} \ \ \ \lambda_{k,l} \]

\[ X_{k,l} \rightarrow X_{k,l}-1 \ \ \  \textrm{at rate} \ \ \ R_{k,l} \]
where $R_{k,l}$ is the bit rate allocated between $t$ and $t+1$. It is worth mentioning that for given values of $X_{k,l}$, the allocated power is the same. Recall that the time between $t$ and $t+1$ corresponds to multiple physical layer timeslots. For any time $u \in [t ; t+1]$, the allocated rate at the physical layer is 
\[ R_{k,l}= log(1+\frac{p_{k,l}G_{k,l}}{1+\sum_{j=1}^L \theta_{k,j}\sum_{i=1}^{K_j} p_{i,j}+ \sum_{i \in \mathcal{I}_k \setminus k} p_{i,l}G_{k,l_i}})\]

\begin{theorem}
The network is stable under power control (\ref{frame1})-(\ref{frame2}) $\forall$  $\mathbf{\lambda}$ $\in$ $\Lambda^{max}$. \label{stabilityI}
\end{theorem}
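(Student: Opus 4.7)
Following the hint in the paragraph preceding the statement, I would carry out the fluid-limit programme. For a sequence of initial conditions with $|\mathbf{X}^{(n)}(0)|=n\to\infty$, define the rescaled process $\mathbf{Y}^{(n)}(t)=\mathbf{X}^{(n)}(nt)/n$. The Poisson arrivals, exponential (mean $1$) service requirements and the continuity of the map $\mathbf{X}\mapsto\mathbf{p}^{\star}(\mathbf{X})$ inherited from the strict convexity of (\ref{framePower1})-(\ref{frameConstraint1}), together with the compactness of the power simplex, yield tightness of $\{\mathbf{Y}^{(n)}\}$ and identify every subsequential limit as an absolutely continuous function satisfying the fluid ODE
\begin{equation*}
\dot y_{k,l}(t)=\lambda_{k,l}-\log\!\bigl(1+\gamma^{\star}_{k,l}(\mathbf{y}(t))\bigr)\,\mathbbm{1}\{y_{k,l}(t)>0\},
\end{equation*}
where $\gamma^{\star}(\mathbf{y})$ is obtained from the maximiser of (\ref{framePower1}) with the counts $X_{k,l}$ replaced by the fluid levels $y_{k,l}$.

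By the standard fluid criterion \cite{fluid79,bonald2001} it suffices to exhibit a non-negative Lyapunov function $V$ with $\dot V(\mathbf{y}(t))<0$ whenever $\mathbf{y}(t)\neq 0$. Taking $V(\mathbf{y})=\tfrac12\sum y_{k,l}^{2}$, the problem reduces to the weighted rate inequality
\begin{equation*}
\sum_{k,l} y_{k,l}\log\!\bigl(1+\gamma^{\star}_{k,l}(\mathbf{y})\bigr)\;\geq\;\sum_{k,l} y_{k,l}\lambda_{k,l}+\eta\|\mathbf{y}\|
\end{equation*}
for some $\eta>0$ and all $\mathbf{y}$ with $\|\mathbf{y}\|$ large. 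Strict membership of $\lambda$ in $\Lambda^{\max}$ supplies $\epsilon>0$ and a feasible (possibly time-shared) allocation $\mathbf{p}'$ whose rates $\log(1+\gamma'_{k,l})$ dominate $(1+\epsilon)\lambda_{k,l}$ pointwise; the KKT optimality of $\mathbf{p}^{\star}(\mathbf{y})$ for the weighted log-SINR objective immediately furnishes the companion log-SINR inequality $\sum y_{k,l}\log\gamma^{\star}_{k,l}(\mathbf{y})\geq\sum y_{k,l}\log\gamma'_{k,l}$, which is the key algebraic lever.

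The main obstacle is to upgrade this log-SINR inequality into the required log-rate inequality, i.e.\ to absorb the discrepancy $\log(1+\gamma)-\log\gamma=\log(1+1/\gamma)$ between the utility actually maximised and the rate that drives the queue. My plan is to exploit positive homogeneity of (\ref{framePower1})-(\ref{frameConstraint1}) in the weights: replacing $\mathbf{y}$ by $c\mathbf{y}$ only rescales the objective by $c$ while leaving the feasible set untouched, so $\mathbf{p}^{\star}$ and $\gamma^{\star}$ depend only on the direction $\mathbf{y}/\|\mathbf{y}\|$. Because this direction lives in a compact set and the gains $G_{k,l},\theta_{k,j}$ are bounded, $\gamma^{\star}_{k,l}(\mathbf{y})$ is confined to a fixed compact subset of $(0,\infty)$ uniformly over $\mathbf{y}\neq 0$, so that $\log(1+1/\gamma^{\star})$ is a uniformly bounded correction; an analogous bound applies to $\gamma'$ after rescaling $\mathbf{p}'$ into the same regime. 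Using the interior slack $\epsilon$ to dominate this bounded correction -- and, if needed, augmenting the Lyapunov function by a linear term $\sum y_{k,l}$ to absorb an additive rather than multiplicative gap -- produces $\dot V\leq-\eta\|\mathbf{y}\|$ for $\|\mathbf{y}\|$ large, and the usual fluid-to-Markov implication converts this into positive recurrence of $\mathbf{X}$, i.e.\ stability in the sense of Definition~1.
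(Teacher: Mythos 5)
You set up the fluid limit exactly as the paper does and you correctly isolate the crux of the matter: the policy maximizes $\sum_{k,l}X_{k,l}\log\gamma_{k,l}$ while the queues drain at rate $\log(1+\gamma_{k,l})$. However, your plan for closing that gap has a genuine hole. The correction you propose to absorb, $\sum_{k,l}y_{k,l}\bigl[\log(1+1/\gamma^{\star}_{k,l})-\log(1+1/\gamma'_{k,l})\bigr]$, is (by the very homogeneity you invoke) of order $\|\mathbf{y}\|$ times a constant that is \emph{not} controlled by the interior slack $\epsilon$, so neither the $\epsilon$-margin nor an added linear term in $V$ (which only contributes $O(1)$, not $O(\|\mathbf{y}\|)$, to the drift) can dominate it. Worse, the target inequality itself is false in general with the unweighted Lyapunov function $V=\tfrac12\sum y_{k,l}^2$: because $\log\gamma\to-\infty$ while $\log(1+\gamma)\to 0$ as $\gamma\to 0$, a log-SINR-optimal allocation can have a strictly smaller $y$-weighted sum \emph{rate} than a feasible comparison point (e.g.\ two equally weighted users, $\gamma^{\star}=(1,1)$ versus $\gamma'=(100,10^{-4})$: the first wins on $\sum\log\gamma$ but loses badly on $\sum\log(1+\gamma)$). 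So KKT optimality of the weighted log-SINR objective does not "immediately furnish" an unweighted rate-domination inequality.

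The paper's resolution is the missing idea. Rewrite the objective in terms of the rate variable $R=\log(1+\gamma)$, i.e.\ maximize $\sum_{k,l}Y_{k,l}\log\left(e^{R_{k,l}}-1\right)$ over the feasible rate region; this is the \emph{same} optimization, and $R\mapsto\log(e^{R}-1)$ is concave. The gradient inequality at a feasible point $\mu_{k,l}$, combined with optimality of $\mathbf{R}^{\star}$, yields the \emph{weighted} rate domination
\begin{equation*}
\sum_{k,l}Y_{k,l}\,\frac{e^{\mu_{k,l}}}{e^{\mu_{k,l}}-1}\,\bigl(\mu_{k,l}-R^{\star}_{k,l}\bigr)\;\leq\;0 ,
\end{equation*}
with weights $e^{\mu_{k,l}}/(e^{\mu_{k,l}}-1)$ determined by the comparison point (these blow up precisely for the low-rate users that break the unweighted version). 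One then chooses the Lyapunov function to carry exactly these weights, $\mathcal{L}(\mathbf{Y})=\sum_{k,l}\tfrac12\,\frac{e^{\mu_{k,l}+\epsilon}}{e^{\mu_{k,l}+\epsilon}-1}\,Y_{k,l}^{2}$ with $\mu+\epsilon$ still feasible, and the drift becomes $\leq-\epsilon\sum_{k,l}Y_{k,l}\,e^{\mu_{k,l}+\epsilon}/(e^{\mu_{k,l}+\epsilon}-1)$ with no leftover correction term to absorb. You would need to replace your compactness-and-slack argument by this reparametrization and the matching weighted Lyapunov function for the proof to go through.
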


\begin{proof}
The proof is based on studying the fluid system obtained by framework (\ref{frame1})-(\ref{frame2}). 
The fluid system is obtained when the number of flows tends to $\infty$, 
\[Y_{k,l}(t)= \underset{N \rightarrow \infty}\lim \frac{X_{k,l}(Nt)}{N}\] with $\sum_{l=1}^L X_{k,l}(0)=N$. By the strong law of large number, the evolution of the process $\mathbf{Y}(t)$ is given by,
\begin{equation}
\frac{d}{dt}Y_{k,l}(t)=\lambda_{k,l}-R_{k,l} \label{fluid}
\end{equation}
One can note that framework (\ref{frame1})-(\ref{frame2}) is equivalent to 
\begin{equation}
\max_{\mathbf{R}}\prod_{k,l} \left(e^{R_{k,l}}-1\right)^{Y_{k,l}} \label{frame11}
\end{equation}
s.t. 
\begin{equation}
\sum_{k=1}^{K_l}f_l(\mathbf{R}) \leq P_l \ \ \ \forall l \label{frame22}
\end{equation}
where $f_l(\mathbf{R})$ is the equivalent of the max power constraint with respect to $\mathbf{R}$. $\mathbf{R}$ is the vector containing all the rates $R_{k,l}$.  (\ref{frame22}) is a feasibility constraint that determines (in addition to the interference) the rate region of the system. 

The above optimization problem is equivalent to,
\begin{equation}
\max_{\mathbf{R}}Log\left(\prod_{k,l} \left(e^{R_{k,l}}-1\right)^{Y_{k,l}}\right) \label{frame13}
\end{equation}
s.t. 
\begin{equation}
\sum_{k=1}^{K_l}f_l(\mathbf{R}) \leq P_l \ \ \ \forall l \label{frame23}
\end{equation}
The aforementioned objective function is equal to $\sum_{l=1}^L\sum_{k=1}^{K_l}Y_{k,l} Log\left(e^{R_{k,l}}-1\right)$ which is strictly concave with respect to $R_{k,l}$. This concavity implies the following. Let $\mathbf{p^*}$ the vector of optimal powers $p^*_{k,l}$ obtained in (\ref{optimum}). The corresponding rate $R^*_{k,l}= Log\left(1+\frac{p^*_{k,l}G_{k,l}}{1+\sum_{j=1}^L \theta_{k,j}\sum_{i=1}^{K_j} p^*_{i,j}+ \sum_{i \in \mathcal{I}_k \setminus k} p^*_{i,l}G_{k,l_i}}\right)$ is the optimal solution of framework (\ref{frame13}-\ref{frame23}). Therefore, by concavity of $Log\left(e^{R_{k,l}}-1\right)$, we have for all rates $\mu_{k,l}$:
\[ Log\left(e^{R^*_{k,l}}-1\right) \leq  Log\left(e^{\mu_{k,l}}-1\right) + \frac{e^{\mu_{k,l}}}{e^{\mu_{k,l}}-1}(R^*_{k,l}-\mu_{k,l})   \]
This implies that for any rate vector $\mathbf{\mu}$ lying inside the rate region, defined by condition (\ref{frame23}) and the interference, we have  $\sum_{k,l} Y_{k,l} Log\left(e^{R^*_{k,l}}-1\right) \geq \sum_{k,l} Y_{k,l} Log\left(e^{\mu_{k,l}}-1\right)$ since $\mathbf{R^*}$ is the optimal solution to (\ref{frame13}-\ref{frame23}). Consequently, from the optimality of  $\mathbf{R^*}$ and the aforementioned concavity inequality we get
\begin{equation}
\sum_{k,l}Y_{k,l}  \frac{e^{\mu_{k,l}}}{e^{\mu_{k,l}}-1}(\mu_{k,l} - R^*_{k,l}) \leq 0 \label{concavity1}
\end{equation}
Consider now the following Lyapunov function $\mathcal{L}(\mathbf{Y}(t))=\sum_{k,l}\frac{1}{2}\frac{e^{(\mu_{k,l}+\epsilon)}}{e^{(\mu_{k,l}+\epsilon)}-1}Y_{k,l}^2(t)$. For any arrival rate $\mu_{k,l}$ lying strictly inside the rate region, $\epsilon$ is selected such that  $(\mu_{k,l}+\epsilon)$ is in the region (or on the boundary of the region). We have
\begin{equation}
\frac{d\mathcal{L}(\mathbf{Y}(t))}{dt}= \sum_{k,l}Y_{k,l}  \frac{e^{(\mu_{k,l}+\epsilon)}}{e^{(\mu_{k,l}+\epsilon)}-1}(\mu_{k,l} - R^*_{k,l}) 
\end{equation}
From (\ref{concavity1}), applied to $(\mu_{k,l}+\epsilon)$, we know that $\sum_{k,l}Y_{k,l}  \frac{e^{(\mu_{k,l}+\epsilon)}}{e^{(\mu_{k,l}+\epsilon)}-1}((\mu_{k,l}+\epsilon) - R^*_{k,l}) \leq 0$. This implies that 
\[ \frac{d\mathcal{L}(\mathbf{Y}(t))}{dt} \leq -\epsilon \sum_{k,l}Y_{k,l}  \frac{e^{(\mu_{k,l}+\epsilon)}}{e^{(\mu_{k,l}+\epsilon)}-1} \]
We conclude that If $\mathcal{L}(\mathbf{Y}(t)) \geq \delta_1$ then the allocation policy in (\ref{optimum}) decreases the Lyapunov function by an amount proportional to $\mathbf{Y}(t)$. This implies, from \cite{bonald2001, andrews2004,fluid79} that the system is stable.

\end{proof}

\section{Distributed implementation with Low Information Exchange}
In practice, each base station allocates the power to the users without exchanging too much information with other base stations. In this section, we provide a distributed power control policy that requires very few signaling overhead.  

At a first look, one can use Algorithm 2.

\begin{algorithm}[htb]
\caption{Distributed Power Control  Algorithm}
\label{alg:II}
\begin{enumerate}
\item Initialize: $\mathbf{\beta}$, $\mathbf{p}$, $\epsilon$ is a very small value;
\item For given value of $\beta_{l}$ $\forall$ $l$: 
\begin{enumerate}
\item Repeat:
\item Each user $j$ calculates/estimates the expressions  $q_{k,l}=1+\sum_{\substack{j=1}}^L \theta_{k,j}\sum_{\substack{i=1}}^{K_j} p^*_{i,j} + \sum_{\substack{i \in \mathcal{I}_j \setminus j}} p^*_{i,l} G_{k,l_i}$  and feeds back it to the base station
\item The BSs exchange all $q_{k,l}$ between each other. 
\item Each BS calculates the function $h_{k,l}(\mathbf{p})=\sum_{l=1}^L\sum_{j}\frac{\theta_{j,l_k}+G_{j,l_k}\mathbf{1}_{k \in I_j \setminus j}}{1+\sum_{\substack{j=1}}^L \theta_{k,j}\sum_{\substack{i=1}}^{K_j} p^*_{i,j} + \sum_{\substack{i \in \mathcal{I}_j \setminus j}} p^*_{i,l} G_{k,l_i}}=\sum_{l=1}^L\sum_{j}\frac{\theta_{j,l_k}+G_{j,l_k}\mathbf{1}_{k \in I_j \setminus j}}{q_{j,l}}$; allocates and transmits the power $p_{k,l}(n+1)=\frac{X_{k,l}}{\beta_l + h_{k,l}(\mathbf{p}(n))}$ $\forall$ $k$
\item Until a fixed point is achieved  $p_{k,l}=\frac{X_{k,l}}{\beta_l + h_{k,l}(\mathbf{p})}$ $\forall$ $k$
\end{enumerate}
\item Each BS does the following test: 
\item If $(\sum_{k=1}^{K_l}p_{k,l}(\mathbb{\beta}) - P_l) \leq \epsilon$ $\forall$ $l$ Stop : the power is obtained 
\item Else each BS updates $\beta_l(i+1)=\left(\beta_l(i)+\delta_i(\sum_{k=1}^{K_l}p_{k,l}(\beta) - P_l)\right)^+$ and goes to step 2
\end{enumerate}
\end{algorithm}
The above algorithm will converge to the optimal solution. However, it suffers from a main weakness: a huge number of information must be exchanged between the BSs before the convergence. In this section, we therefore take advantage of the particularity of our problem and propose another distributed algorithm that requires small information exchange between the BSs. We will show then that our algorithm can stabilize the network for any user arrival rate  $\mathbf{\lambda}$ $\in$ $\Lambda^{max}$.


\subsection{Distributed Algorithm with Low Information Exchange}
The main idea of our algorithm is as follows. Recall from the system model that $(k,l)$ means a given position in cell $l$. Therefore $G_{k,l}$ and $\theta_{k,l}$ are well know (function of the path loss for a given position). These values need to be exchanged once at the beginning of connections (or at least once every few seconds). If the BSs exchange the values of their $X_{k,l}$, each BS will therefore have all the required parameters to build locally the whole optimization framework:
\begin{eqnarray}
\max_{\mathbf{p}} \sum_{l=1}^L \sum_{k=1}^{K_l}X_{k,l} log\left(\gamma_{k,l}(t)\right)
\end{eqnarray}
s.t.
\begin{equation}
\sum_{k=1}^{K_l}p_{k,l_k} \leq P_l \ \ \ \forall l
\end{equation}
where $$\gamma_{k,l}(t)= \frac{ p_{k,l_k}G_{k,l_k}}{1+\sum_{j=1}^L \theta_{k,j}\sum_{i=1}^{K_j} p_{i,j_i}+ \sum_{i \in \mathcal{I}_k \setminus k} p_{i,l_i}G_{k,l_i}}$$ 
Notice that the number of users $X_{k,l}$ changes slowly i.e. once each multiple time slots (as explained earlier in the paper). In this section, we go further and reduce more the exchange of information between the BSs. We consider that each  BS quantizes its values of $X_{k,l}$ and exchanges the quantized values with the other BS only once eaxh $D$ slots. In other words, each BS $j$ knows an outdated noisy version of the values of $X_{k,l}$ of other BS $k$. We denote this quantized outdated value by $\hat{X}_{k,l}$. The algorithm is then as follows: 

\begin{algorithm}[htb]
\caption{Distributed Power Control  Algorithm with small Information Exchange}
\label{alg:III}
\begin{itemize}
\item Each BS $l$ will use the quantized outdated value $\hat{X}_{i,j}$ instead of $X_{i,j}$ $\forall$ $i,j$ (including $j=l$ i.e. even for its own values) 
\item Using $\hat{X}_{i,j}$ $\forall$ $i,j$, BS $l$ performs Algorithm (\ref{alg:I}) and obtain the whole vector of power $p_{k,l}, ..., p_{i,j},...$ $\forall$ $i,j$. It drops of course  the values  $p_{i,j}$ $\forall$ $j\neq l$ and uses  only its own values of $p_{k,l}$ for transmission. 
\item Each BS has then its own vector $p_{k,l}$ and transmission can be performed.  
\end{itemize}
\end{algorithm}
Recall that in Algorithm (\ref{alg:I}), the update of any power value $p_{k,l}$ requires the knowledge of all the other values $p_{i,j}$ which explains the need for each BS to obtain the whole power vector $p_{k,l}, ..., p_{i,j},...$ $\forall$ $i,j$. Of course, the power values $p_{k,l}, ..., p_{i,j},...$ $\forall$ $i,j$ obtained by the BSs are identical due to the fact that the same $\hat{X}_{i,j}$ are used all BSs.  It is obvious that the aforementioned algorithm achieves the optimal solution of the following problem,

\begin{eqnarray}
\max_{\mathbf{p}} \sum_{l=1}^L \sum_{k=1}^{K_l}\hat{X}_{k,l} log\left(\gamma_{k,l}(t)\right)
\end{eqnarray}
s.t.
\begin{equation}
\sum_{k=1}^{K_l}p_{k,l_k} \leq P_l \ \ \ \forall l
\end{equation}

\subsection{Stability Analysis of the Distributed Algorithm}
We will show in the next theorem that our distributed algorithm stabilizes the network for any average number of arriving users $\mathbf{\lambda}$ $\in$ $\Lambda^{max}$. In other words, in terms of stability of the network, our distributed algorithm achieves the same performance as the optimal centralized algorithm.

\begin{theorem}
Algorithm (\ref{alg:III}) stabilizes the network $\forall$  $\mathbf{\lambda}$ $\in$ $\Lambda^{max}$.
\end{theorem}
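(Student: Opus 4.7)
The plan is to follow the fluid-limit framework already used for Theorem \ref{stabilityI}, and to show that the additional error introduced by (a) quantization of the $X_{k,l}$ values and (b) the period-$D$ refresh, vanishes under the fluid scaling. Once this is done, the fluid trajectory associated with Algorithm \ref{alg:III} coincides with the one analysed in Theorem \ref{stabilityI}, and the same Lyapunov drift argument yields stability.

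First, I would rescale time and magnitude exactly as before: set $Y_{k,l}(t) = \lim_{N\to\infty} X_{k,l}(Nt)/N$ with $\sum_{l,k} X_{k,l}(0) = N$. The distributed algorithm uses $\hat{X}_{k,l}$, where $\hat{X}_{k,l} = X_{k,l}(t - \delta) + q_{k,l}$ with $\delta \le D$ and $|q_{k,l}|$ bounded by the quantization step. The key observation is that $|\hat{X}_{k,l}(Nt) - X_{k,l}(Nt)|$ is bounded above by the quantization step plus the number of arrivals and departures during the last $D$ flow-level slots. Both quantities are $O(1)$ as $N\to\infty$ (in particular, arrivals during $D$ slots form a Poisson random variable with bounded mean independent of $N$). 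Dividing by $N$ gives
\begin{equation}
\lim_{N\to\infty} \frac{\hat{X}_{k,l}(Nt)}{N} \;=\; Y_{k,l}(t) \qquad \text{a.s.,}
\end{equation}
so the fluid-scaled outdated/quantized state is indistinguishable from the true fluid state.

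Next I would examine the allocated rate. Algorithm \ref{alg:III} produces, at each flow-level slot, the optimal solution of the same convex program as Algorithm \ref{alg:I} but with $\hat{X}_{k,l}$ in place of $X_{k,l}$. Because the program is convex, its optimizer and hence the induced rate vector depend continuously on the weights $X_{k,l}$ through the KKT system \eqref{optimum}. Combining this continuity with the previous step, the rate $\hat{R}_{k,l}$ allocated by Algorithm \ref{alg:III} converges in the fluid limit to the rate $R^*_{k,l}$ produced by the centralized algorithm. Therefore the fluid limit equation \eqref{fluid} is unchanged:
\begin{equation}
\frac{d}{dt} Y_{k,l}(t) \;=\; \lambda_{k,l} - R^*_{k,l}.
\end{equation}

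Finally, I would reuse verbatim the Lyapunov function $\mathcal{L}(\mathbf{Y}(t)) = \sum_{k,l} \tfrac{1}{2}\tfrac{e^{\mu_{k,l}+\epsilon}}{e^{\mu_{k,l}+\epsilon}-1} Y_{k,l}^2(t)$ and the concavity inequality \eqref{concavity1}, which only use the optimality of $\mathbf{R}^*$ for the program with weights $Y_{k,l}$. This yields the strict negative drift $\tfrac{d\mathcal{L}}{dt} \le -\epsilon \sum_{k,l} Y_{k,l}\tfrac{e^{\mu_{k,l}+\epsilon}}{e^{\mu_{k,l}+\epsilon}-1}$ whenever $\mathbf{Y}(t)\neq 0$, and standard fluid-limit results \cite{bonald2001,andrews2004,fluid79} imply stability. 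The main obstacle, and the step that deserves careful treatment, is justifying that the $D$-slot delay and the quantization error are truly negligible under fluid scaling; this reduces to checking that the number of user-level events in a fixed window is $o(N)$ and that the quantization step is a constant independent of $N$, both of which hold under the modelling assumptions of Section \ref{model}.
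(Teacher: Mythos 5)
Your proposal is sound but follows a genuinely different route from the paper's. The paper does \emph{not} argue that the quantized, outdated state coincides with the true state in the fluid limit. Instead it keeps the two fluid processes $Y_{k,l}$ and $\hat Y_{k,l}$ distinct, bounds their gap by a constant $B=\max\left(D\lambda^{max}+E_Q,\; DR^{max}+E_Q\right)$ (no arrivals/max-rate departures over the $D$ stale slots, plus the quantization error), states the concavity inequality \eqref{concavity2} with the weights $\hat Y_{k,l}$ that the distributed algorithm actually optimizes, and then splits the Lyapunov drift as $\sum_{k,l}(Y_{k,l}-\hat Y_{k,l})(\cdots)+\sum_{k,l}\hat Y_{k,l}(\cdots)$. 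The first term is absorbed into an additive constant $C$, giving $\frac{d\mathcal L}{dt}\le C-\epsilon\sum_{k,l}Y_{k,l}\frac{e^{\mu_{k,l}+\epsilon}}{e^{\mu_{k,l}+\epsilon}-1}$, which is negative outside a compact set. Your route instead shows the gap is $O(1)$ before scaling, hence vanishes after dividing by $N$, so $\hat Y=Y$ and the problem collapses exactly onto Theorem~\ref{stabilityI} with no residual constant. That is a cleaner and in fact stronger conclusion, and it is also more honest about the scaling (the paper's retention of the unscaled constant $B$ at the fluid level is conservative). What your route buys in sharpness it pays for in an extra step the paper's decomposition avoids: you must justify that the rate vector produced by the program with weights $\hat X_{k,l}$ converges to the one with weights $Y_{k,l}$, i.e.\ continuity of the optimizer in the weights. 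Convexity alone does not give this; you need uniqueness of the optimal rate vector, which here follows from the strict concavity of $\sum_{k,l}Y_{k,l}\log\left(e^{R_{k,l}}-1\right)$ in $\mathbf R$ noted in the proof of Theorem~\ref{stabilityI}. If you spell that out, your argument is complete; the paper's cross-term decomposition sidesteps the issue entirely by only ever invoking optimality of $\mathbf R^*$ for the $\hat Y$-weighted problem.
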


\begin{proof} 
The proof is based on studying the fluid system.  
Let \[Y_{k,l}(t)= \underset{N \rightarrow \infty}\lim \frac{X_{k,l}(Nt)}{N}\] with $\sum_{l=1}^L X_{k,l}(0)=N$ and  \[\hat{Y}_{k,l}(t)= \underset{N \rightarrow \infty}\lim \frac{\hat{X}_{k,l}(Nt)}{N}\] with $\sum_{l=1}^L \hat{X}_{k,l}(0)=N$. 

The distributed algorithm (\ref{alg:III}) solves the optimization problem  
\begin{eqnarray}
\max_{\mathbf{p}} \sum_{l=1}^L \sum_{k=1}^{K_l}\hat{Y}_{k,l} log\left(\gamma_{k,l}(t)\right) \label{Dist}
\end{eqnarray}
s.t.
\begin{equation}
\sum_{k=1}^{K_l}p_{k,l_k} \leq P_l \ \ \ \forall l \label{DistConst}
\end{equation}

Let $\mathbf{p^*}$ be the vector of optimal powers $p^*_{k,l}$ obtained by (\ref{alg:III}). Its corresponding rate $R^*_{k,l}= Log\left(1+\frac{p^*_{k,l}G_{k,l}}{1+\sum_{j=1}^L \theta_{k,j}\sum_{i=1}^{K_j} p^*_{i,j}+ \sum_{i \in \mathcal{I}_k \setminus k} p^*_{i,l}G_{k,l_i}}\right)$. In a similar way as in the proof of Theorem (\ref{stabilityI}), we can show that for any rate vector $\mathbf{\mu}$ lying inside the rate region, we have
\begin{equation}
\sum_{k,l}\hat{Y}_{k,l}  \frac{e^{\mu_{k,l}}}{e^{\mu_{k,l}}-1}(\mu_{k,l} - R^*_{k,l}) \leq 0 \label{concavity2}
\end{equation}

The next step in the proof is to find upper and lower bounds of  the difference between $Y_{k,l}(t)$ and $\hat{Y}_{k,l}$. Let $R^{max}$ be the maximum transmission rate for all users, i.e. $R^{max}=\sup_{k,l,t}R_{k,l}(t)$, and $\lambda^{max}$ the maximum users' arrival rate i.e. $\lambda^{max}=\sup_{k,l}\lambda_{k,l}$. Recall that a quantized version of $X_{k,l}(t)$ is exchanged once each D slots. One can then see that 
\[\hat{X}_{k,l}(t)-D R^{max} - E_Q \leq X_{k,l}(t) \leq \hat{X}_{k,l}(t)+D\lambda^{max} +E_Q\]
where $E_Q$ is the max quantization error. The bounds are determined as follows. At time $t-D$ the real value of $X_{k,l}(t-D)$ is lower bound by $\hat{X}_{k,l}(t)-E_Q$. Then a lower bound for $X_{k,l}(t)$ at time $t$ can be obtained by assuming that no arrivals arise during $D$ slots and departures are at maximum rate $R^{max}$.  The upper bound can obtained by assuming that no departure arises during $D$ slots and arrivals are at max rate $\lambda^{max}$. By using fluid limit, we can bound $Y_{k,l}(t)$ as follows
\[\hat{Y}_{k,l}(t)-D R^{max} - E_Q \leq Y_{k,l}(t) \leq \hat{Y}_{k,l}(t)+D\lambda^{max} +E_Q\]
The above inequality implies 
\[|Y_{k,l}(t)-\hat{Y}_{k,l}(t)| \leq B\]
where $B=\max \left(D\lambda^{max} +E_Q, D R^{max} + E_Q\right)$. This implies that $\|Y_{k,l}(t)-\hat{Y}_{k,l}(t)\|$ is upper bounded by a constant independent of $Y_{k,l}(t)-$. 

Consider now the following Lyapunov function $\mathcal{L}(\mathbf{Y}(t))=\sum_{k,l}\frac{1}{2}\frac{e^{(\mu_{k,l}+\epsilon)}}{e^{(\mu_{k,l}+\epsilon)}-1}\left(Y_{k,l}(t)\right)^2$. Recall that by the strong law of large number, the evolution of the process $\mathbf{Y}(t)$ is given by,
\begin{equation}
\frac{d}{dt}Y_{k,l}(t)=\lambda_{k,l}-R_{k,l} \label{fluid}
\end{equation}
For any arrival rate $\mu_{k,l}$ lying strictly inside the rate region, $\epsilon$ is selected such that  $(\mu_{k,l}+\epsilon)$ is in the region (or on the boundary of the region). We have
\begin{eqnarray}
\frac{d\mathcal{L}(\mathbf{Y}(t))}{dt} &= & \sum_{k,l}\left(Y_{k,l}\right)  \frac{e^{(\mu_{k,l}+\epsilon)}}{e^{(\mu_{k,l}+\epsilon)}-1}(\mu_{k,l} - R^*_{k,l}) \nonumber \\ &=&  \sum_{k,l}\left(Y_{k,l}-\hat{Y}_{k,l}\right)  \frac{e^{(\mu_{k,l}+\epsilon)}}{e^{(\mu_{k,l}+\epsilon)}-1}(\mu_{k,l} - R^*_{k,l})  \nonumber \\ &+&  \sum_{k,l}\left(\hat{Y}_{k,l}\right)  \frac{e^{(\mu_{k,l}+\epsilon)}}{e^{(\mu_{k,l}+\epsilon)}-1}(\mu_{k,l} - R^*_{k,l}) \nonumber \\ &\leq &  \sum_{k,l} B  \frac{e^{(\mu_{k,l}+\epsilon)}}{e^{(\mu_{k,l}+\epsilon)}-1}|\mu_{k,l} - R^*_{k,l}|  \nonumber \\ &+&  \sum_{k,l}\left(\hat{Y}_{k,l}\right)  \frac{e^{(\mu_{k,l}+\epsilon)}}{e^{(\mu_{k,l}+\epsilon)}-1}(\mu_{k,l} - R^*_{k,l}) \nonumber \\ & \leq & C+ \sum_{k,l}\left(\hat{Y}_{k,l}\right)  \frac{e^{(\mu_{k,l}+\epsilon)}}{e^{(\mu_{k,l}+\epsilon)}-1}(\mu_{k,l} - R^*_{k,l}) \nonumber \\ \label{DLHat}
\end{eqnarray}
where $C=\max \left(  \sum_{k,l} B  \frac{e^{(\mu_{k,l}+\epsilon)}}{e^{(\mu_{k,l}+\epsilon)}-1}|\mu_{k,l} - R^*_{k,l}|  \right)$. One can notice that $C$ is independent of $Y_{k,l}(t)$ and $\hat{Y}_{k,l}(t)$ (Recall that $\mu_{k,l}$ and $R^*_{k,l}$ are both bounded) . 
From (\ref{concavity2}), applied to $(\mu_{k,l}+\epsilon)$, we know that $\sum_{k,l}\hat{Y}_{k,l}  \frac{e^{(\mu_{k,l}+\epsilon)}}{e^{(\mu_{k,l}+\epsilon)}-1}((\mu_{k,l}+\epsilon) - R^*_{k,l}) \leq 0$. Using the inequality (\ref{DLHat}), we get, 
\[ \frac{d\mathcal{L}(\mathbf{Y}(t))}{dt} \leq C -\epsilon \sum_{k,l}Y_{k,l}  \frac{e^{(\mu_{k,l}+\epsilon)}}{e^{(\mu_{k,l}+\epsilon)}-1} \]
The above inequality shows clearly that when $\mathcal{L}(\mathbf{Y}(t))$ is high, i.e. $\mathcal{L}(\mathbf{Y}(t)) \geq \Delta_1$ then the allocation policy in Algorithm (\ref{alg:III})  makes $ \frac{d\mathcal{L}(\mathbf{Y}(t))}{dt} <0$ i.e. the Lyapunov function decreases.  This implies, from \cite{andrews2004,fluid79} that the system is stable.

\end{proof}

\section{Numerical Results} 
	We provide numerical results illustrating the stability performance of the system. We consider an hexagonal cell network  with $2$ cells. Each cell has a  radius $1\;Km$ from center to vertex. Each base station is equipped with  $M=100$ antennas and the system bandwidth is $B=20\; MHz$.
	For the sake of tractability, we consider two possible locations for the users in each cell: i) the users are at the border of the cells or ii) the users are at  $r_0= 100 \;m$ from their serving BS.   In each location $k,l$, the  user flow arrive according to a Poisson process with average rate $\lambda_{k,l}$. The users will be receiving  packets of fixed size $1 Mbits$. 
	 For large scale fading coefficients we take into consideration only path-loss where $g_{k,l} = (r_{k,l})^{(-\sigma)}$, between the  $k^{th}$ user in the $l^{th}$ cell and its serving BS.  We took the path-loss exponent $\sigma=2.5$.  

	\begin{figure}[h!]
		\centering
		\includegraphics[width=7cm,height=5.5cm]{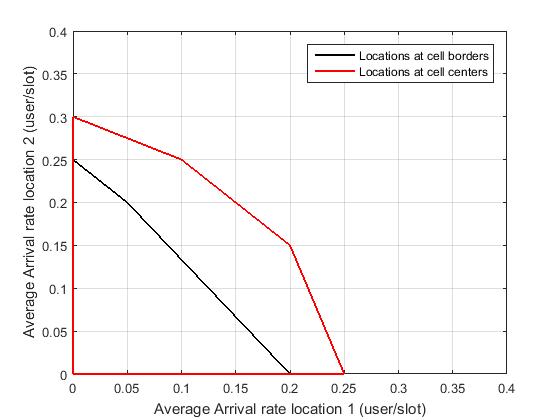}
		\caption{Stability Region}
	\end{figure}

\section{Conclusion}
In this paper, we have analyzed the stability of a multicellular network with massive MIMO and pilot contamination issue. Contrary to most of existing work, we consider that the number of users in the network is time varying. We have provided a simple power control policy and prove  that this policy stabilizes the network whenever possible. We have provided a distributed power allocation strategy that requires very low information exchange between the BSs and have shown that it achieves the same stability region as the centralized policy. 

\appendix 
\section*{Proof of convexity of the power control problem (\ref{framePower1}-\ref{frameConstraint1})}
\begin{proof} The objective function is 
\begin{eqnarray}
F= - \sum_{l=1}^L \sum_{k=1}^{K_l} X_{k,l} \biggl(\tilde{p}_{k,l}+log(G_{k,l})- \nonumber \\ log\left[1+\sum_{\substack{j=1}}^L \theta_{k,j}\sum_{\substack{i=1}}^{K_j} e^{\tilde{p}_{i,j}} + \sum_{\substack{i \in \mathcal{I}_k \setminus k}} e^{\tilde{p}_{i,l}} G_{k,l_i}\right]\biggr)
\end{eqnarray}
To show the convexity of the above objective function it is sufficient to prove that $\hat{F}_{k,l} = log\left[1+\sum_{\substack{j=1}}^L \theta_{k,j}\sum_{\substack{i=1}}^{K_j} e^{\tilde{p}_{i,j}} + \sum_{\substack{i \in \mathcal{I}_k \setminus k}} e^{\tilde{p}_{i,l}} G_{k,l_i}\right]$ is convex. The convexity of $F$ follows since the sum of convex and affine functions is convex. The convexity of $\hat{F}_{k,l} $ can be proven easily by showing that the Hessian $H_{k,l}$ is positive definite.

\[H_{k,l}=X_{k,l}\frac{1}{d^2_{k,l}}\left(d_{k,l}diag(\mathbf{b}^{k,l})-\mathbf{b}_{k,l}\mathbf{b}_{k,l}^T\right)\]
where $d_{k,l}=1+\sum_{\substack{j=1}}^L \theta_{k,j}\sum_{\substack{i=1}}^{K_j} e^{\tilde{p}_{i,j}} + \sum_{\substack{i \in \mathcal{I}_k \setminus k}} e^{\tilde{p}_{i,l}} G_{k,l_i}$ and $\mathbf{b}_{k,l}$ is a vector of length $L\times K$ where $b_{k,l}(i,j)=e^{\tilde{p}_{i,j}}\left(\theta_{k,j}+\mathbf{1}_{\{i\in  \mathcal{I}_k \setminus k\}}G_{k,j}\right)$. Notice that the index $(i,j)$ in  $b_{k,l}(i,j)$ means user $i$ in cell $j$ (i.e. $j=l_i$). We then can show that $\forall$ $\mathbf{u}$ $\in$ $\mathbb{R}^{K\times L}$
\begin{eqnarray}
\mathbf{u}^T H_{k,l}\mathbf{u}= \nonumber \\ \frac{X_{k,l}\left(d_{k,l}(\sum_{(i,j)}u^2_{i,j}b_{k,l}(i,j))-(\sum_{i,j}u_{i,j}b_{k,l}(i,j))^2\right)}{d^2_{k,l}} \nonumber \\ > 0
\end{eqnarray}
where the positivity follows from the Cauchy-Schwarz inequality (recall that $d_{k,l}$ can be written as $1+\sum_{i,j}b_{k,l}(i,j)$). One can refer to \cite{chiang} for more details. This completes the proof. 
\end{proof}





\end{document}